\newtheorem{problem}{Problem}
\newtheorem{definition}{Definition}
\newtheorem{lemma}{Lemma}
\newcommand{\ket}[1]{\left | #1 \right\rangle}
\newcommand{\bra}[1]{\left \langle #1 \right |}
\newcommand{\half}{\mbox{$\textstyle \frac{1}{2}$}}
\newcommand{\Tr}{\text{Tr}}
\newcommand{\braket}[2]{\left\langle #1|#2\right\rangle}
\newcommand{\proj}[1]{\ket{#1}\bra{#1}}
\newcommand{\identity}{\mathbb{1}}
\renewcommand{\epsilon}{\varepsilon}
\begin{document}

\title{Generating Quantum States through Spin Chain Dynamics}
\date{\today}

\author{Alastair \surname{Kay}}
\affiliation{Department of Mathematics, Royal Holloway University of London, Egham, Surrey, TW20 0EX, UK}
\email{alastair.kay@rhul.ac.uk}
\begin{abstract}
Spin chains can realise perfect quantum state transfer between the two ends via judicious choice of coupling strengths. In this paper, we study what other states can be created by engineering a spin chain. We conclude that, up to local phases, all single excitation quantum states with support on every site of the chain can be created. We pay particular attention to the generation of W-states that are superposed over every site of the chain.
\end{abstract}

\maketitle

\section{Introduction}

Spin chains are good models for a large variety of one-dimensional systems that exhibit quantum effects. For the past decade, these systems have been intensively studied from the perspective of quantum information -- understanding how these chains can be used to implement the tasks that we specify. Perfect state transfer (see, for example, \cite{bose2003,christandl2004,burgarth2005,kay2006,kay2010-a}) -- making particular choices of the couplings strengths and magnetic fields such that a single qubit state $\ket{\psi}$ on the first spin at time $t=0$ arrives perfectly at the last spin at the state transfer time, $t_0$ -- is the typical case examined. The same solutions generate entanglement, both bipartite \cite{christandl2005} and that required for cluster states \cite{clark2005}. Simple modification of these coupling schemes permits fractional revivals \cite{dai2010,kay2010-a,banchi2015,genest2016} -- superposing the input state over the two extremal sites of the chain. Meanwhile, modification of the form of the Hamiltonian has demonstrated that other tasks can be achieved, such as the generation of a GHZ state \cite{kay2007}.

In this paper, we address the question of what other functions a spin chain can realise. In answer, we demonstrate that a wide range of one-excitation states can be generated by evolving an excitation initially located on a single site, including the important case of the $W$ state of $N$ qubits. The solution is related to the study of inverse eigenvalue and inverse eigenmode problems \cite{gladwell2005}. However, the variant that we require is, to our knowledge, unstudied. Although we prove that a solution to this variant cannot be guaranteed, we provide a protocol that yields strategies that are sufficient for our needs.

\subsection{Setting}

The Hamiltonian of a spin chain of length $N$ is
\begin{equation}
H=\sum_{n=1}^N\frac{B_n}{2}(\identity-Z_n)+\sum_{n=1}^{N-1}\frac{J_n}{2}(X_nX_{n+1}+Y_nY_{n+1}),	\label{eqn:ham}
\end{equation}
where $X_n$ denotes the Pauli $X$ matrix applied to site $n$ (and $\identity$ elsewhere). It is excitation preserving,
$$
\left[H,\sum_{n=1}^NZ_n\right]=0,
$$
meaning that, for example, any one-excitation state (a state of $N-1$ $\ket{0}$s and one $\ket{1}$) can only be evolved into another one-excitation state. Indeed, the Hamiltonian when restricted to the first excitation subspace is described as
$$
H_1=\sum_{n=1}^NB_n\proj{n}+\sum_{n=1}^{N-1}J_n(\ket{n}\bra{n+1}+\ket{n+1}\bra{n}),
$$
where $\ket{n}:=\ket{0}^{\otimes (n-1)}\ket{1}\ket{0}^{\otimes (N-n)}$. This is a real, symmetric, tridiagonal matrix where each of the elements can be independently specified, making it ideal for the engineering tasks that we intend to study. Moreover, via the Jordan-Wigner transformation, one can readily describe the evolution of higher excitation states in terms of the evolution of single excitation states.

We will study the following problem:
\begin{problem} \label{prob:main}
Given a normalised one excitation state
$$
\ket{\alpha}=\sum_{n=1}^N\alpha_n\ket{n},
$$
find coupling strengths $\{J_n\}$, magnetic fields $\{B_n\}$, and an initial site $k$ such that there exists a time $t_0$ for which
$$
e^{-iHt_0}\ket{k}=\ket{\alpha}.
$$
\end{problem}
We do this by showing it is sufficient to ensure that the Hamiltonian $H_1$ has eigenvalues which satisfy a particular property, and by fixing one of the eigenvectors. We show that, under this particular mapping of the problem, there are instances where there is no solution, and instances when the solution is not unique. However, in the practical sense of answering \ref{prob:main}, we provide a technique that gives arbitrarily high quality results for all but a very small category of possible cases (specified by a property on the $\alpha_n$).

\section{The Hybrid Inverse Eigenvalue/mode Problem}

Consider the hybrid inverse eigenvalue/mode problem:
\begin{problem} \label{prob:supp}
Given a real, normalised vector
$$
\ket{\eta}=\sum_{n=1}^N\eta_n\ket{n},
$$
such that $\eta_1\eta_N\neq0$ and no two consecutive values $\eta_n$ and $\eta_{n+1}$ are both zero, and a set of distinct real numbers $\Lambda=\{\lambda_n\}_{n=1}^N$, find a real, symmetric, tridiagonal matrix $H_1$ with eigenvalues $\Lambda$ such that $H_1\ket{\eta}=\eta\ket{\eta}$ ($\eta\in\Lambda$).
\end{problem}
\noindent The constraints on the values of $\eta_n$ are necessary conditions for $\ket{\eta}$ to be an eigenvector of $H_1$ \cite{gladwell1986}\footnote{\cite{gladwell1986} specifies a further property on sign changes between $\eta_{n-1}$ and $\eta_{n+1}$ if $\eta_n=0$ because they imposed that all the $J_n$ should be negative. All that remains from that condition is that we cannot allow $\alpha_n=\alpha_{n+1}=0$.}. Similarly, tridiagonal matrices do not have degenerate eigenvalues.

Particular instances of Problem \ref{prob:supp} provide a solution to Problem \ref{prob:main} via the observation:
\begin{lemma} \label{lem:reduction}
A sufficient condition for a solution to Problem \ref{prob:main} is that there exists a $k\in[N]$ and $t_0\in\mathbb{R}^+$ for which
$$
\ket{\eta}=\frac{\ket{k}-\ket{\alpha}}{\sqrt{2(1-\alpha_k)}}
$$
has a solution to Problem \ref{prob:supp} with a spectrum in which $e^{-it_0\eta}=1$ and $e^{-it_0\lambda}=-1$ for all $\lambda\in\Lambda\setminus\eta$.
\end{lemma}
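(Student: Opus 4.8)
The plan is to recognise the stated spectral condition as the signature of a Householder reflection: if, at time $t_0$, one eigenvalue of $H_1$ has phase $+1$ and every other has phase $-1$, then $e^{-iH_1 t_0}$ is (up to an overall sign) the reflection about the corresponding eigenvector, which here is $\ket{\eta}$ — and $\ket{\eta}$ was built precisely so that this reflection carries $\ket{k}$ to $\ket{\alpha}$.

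First I would take the matrix $H_1$ supplied by the assumed solution to Problem~\ref{prob:supp}. It is real, symmetric and tridiagonal, so reading its diagonal as $\{B_n\}$ and its off-diagonal as $\{J_n\}$ produces a spin-chain Hamiltonian $H$ of the form~(\ref{eqn:ham}) whose single-excitation restriction is exactly this $H_1$; these are the couplings and fields I return, together with the site $k$ and the time $t_0$. I would also note, before anything else, that the hypothesis ``$\ket{\eta}$ has a solution to Problem~\ref{prob:supp}'' already guarantees that $\ket{\eta}$ is real, unit-normalised and meets the support constraints there; in particular $\ket{k}-\ket{\alpha}$ is a real vector (so all $\alpha_n\in\mathbb{R}$), and a one-line computation gives $\|\ket{k}-\ket{\alpha}\|^2 = 2-2\alpha_k = 2(1-\alpha_k)$, so the prescribed normalisation constant really is the norm of $\ket{k}-\ket{\alpha}$ and $\alpha_k\neq 1$ (the case $\alpha_k=1$ forces $\ket{\alpha}=\ket{k}$ and is trivial).

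Next I would expand $e^{-iH_1 t_0} = \sum_j e^{-it_0\lambda_j}\proj{v_j}$ in an orthonormal eigenbasis $\{\ket{v_j}\}$, one member of which is $\ket{\eta}$ (eigenvalue $\eta$). Substituting $e^{-it_0\eta}=1$ and $e^{-it_0\lambda}=-1$ for all other eigenvalues, and using $\sum_j\proj{v_j}=\identity$, collapses the sum to
$$
e^{-iH_1 t_0} = \proj{\eta}-(\identity-\proj{\eta}) = 2\proj{\eta}-\identity .
$$
Applying this to $\ket{k}$ and using $\braket{\eta}{k}=\eta_k=(1-\alpha_k)/\sqrt{2(1-\alpha_k)}$ gives $e^{-iH_1 t_0}\ket{k} = 2\eta_k\ket{\eta}-\ket{k} = (\ket{k}-\ket{\alpha})-\ket{k} = -\ket{\alpha}$. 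Since $H$ conserves excitation number and $\ket{k},\ket{\alpha}$ lie in the single-excitation sector, $e^{-iHt_0}\ket{k}=-\ket{\alpha}$, which is $\ket{\alpha}$ up to an irrelevant global phase; if the exact equality in Problem~\ref{prob:main} is wanted, replacing every $B_n$ by $B_n+\pi/t_0$ shifts all eigenvalues by $\pi/t_0$, flips the sign of every $e^{-it_0\lambda}$, turns the evolution into $\identity-2\proj{\eta}$, and hence sends $\ket{k}$ to $+\ket{\alpha}$ — all while keeping $H_1$ real, symmetric, tridiagonal with distinct eigenvalues and $\ket{\eta}$ among its eigenvectors.

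I do not expect a genuine obstacle in this direction: once the spectral condition is read as ``reflection about $\ket{\eta}$'', the conclusion is immediate from the definition of $\ket{\eta}$. The only things needing care are the bookkeeping above — that the given denominator is genuinely $\|\ket{k}-\ket{\alpha}\|$ so that $2\proj{\eta}-\identity$ is the correct operator, that $\ket{\eta}$ inherits Problem~\ref{prob:supp}'s input constraints, and the harmless overall sign. The real difficulty, of course, lies in the converse problem the rest of the paper addresses: exhibiting a vector $\ket{\eta}$, a spectrum $\Lambda$, and a time $t_0$ for which Problem~\ref{prob:supp} is actually solvable with the required phase pattern.
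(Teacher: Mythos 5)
Your proof is correct and follows essentially the same route as the paper: the spectral condition makes $e^{-iH_1t_0}=2\proj{\eta}-\identity$, a reflection about $\ket{\eta}$, which sends $\ket{k}$ to $-\ket{\alpha}$. Your extra bookkeeping (checking the normalisation constant and removing the global sign by shifting the $B_n$ by $\pi/t_0$) is sound but not present in the paper, which simply accepts the overall phase of $-1$.
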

\begin{proof}
By definition, we have $e^{-iH_1t_0}=-(\identity-2\proj{\eta})$. So, by starting from a state $\ket{k}$,
$$
e^{-iH_1t_0}\ket{k}=-\ket{k}+2\braket{\eta}{k}\ket{\eta}=-\ket{\alpha}.
$$
Without loss of generality, we take $\eta=0$. Thus, all other eigenvalues must be half-integer multiples of $2\pi/t_0$.
\end{proof}

Note that this result is only sufficient -- it is certainly not necessary as it does not include the case of perfect state transfer or fractional revivals (for $N>3$) because these cases have $k=1$ and $\eta_2,\ldots,\eta_{N-1}=0$. 

To our knowledge, the construction of tridiagonal matrices with a specific spectrum and a specific eigenvector has not been studied, although the independent questions of inverse eigenvalue \cite{hochstadt1967} and inverse eigenmode \cite{gladwell1986} problems have been examined. As such, we are interested in categorising when solutions to Problem \ref{prob:supp} exist, and how to find them.

We start by making an observation about the necessary pattern of signs of the coupling strengths such that a specified eigenvector can correspond to a particular eigenvalue in the ordered sequence. Recall \cite{gladwell1986} that if all the $J_n$ are negative, the eigenvector with the $n^{th}$ largest eigenvalue has $N-n$ sign changes in its amplitudes. Thus, to ensure that a particular eigenvector $\ket{\eta}$ has the $n^{th}$ largest eigenvalue, find a diagonal matrix $D$, with $D^2=\identity$ such that $D\ket{\eta}$ has $N-n$ sign changes. Thus, if matrix $H_1$ has coupling strengths $J_n$ which are all negative, and an eigenvector $D\ket{\eta}$ which has $N-n$ sign changes, and thus has the $n^{th}$ largest eigenvalue, the matrix $DH_1D$ has the same magnetic fields, the coupling strengths are the same up to sign changes
$$
\text{sign}(J_m)=-D_mD_{m+1},
$$
and $\ket{\eta}$ is an eigenvector. Moreover, since $D$ is unitary, the transformation was isospectral, and $\ket{\eta}$ must have the $n^{th}$ largest eigenvector.

\begin{lemma}
Specifying a spectrum and a target eigenvector is insufficient to yield a unique solution.
\end{lemma}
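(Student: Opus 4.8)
The plan is to exhibit an explicit one-parameter family of distinct tridiagonal matrices sharing the same spectrum and the same eigenvector, thereby showing non-uniqueness constructively. The cleanest setting is the smallest nontrivial case: take $N=3$, fix the spectrum $\Lambda=\{\lambda_1,\lambda_2,\lambda_3\}$ and fix the target eigenvector $\ket{\eta}$ to be associated with the middle eigenvalue $\lambda_2$. First I would write down the general real symmetric tridiagonal $3\times 3$ matrix $H_1$ with diagonal $(B_1,B_2,B_3)$ and off-diagonals $(J_1,J_2)$, and impose the eigenvector equation $H_1\ket{\eta}=\lambda_2\ket{\eta}$ component-by-component. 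This gives three linear equations in the five unknowns $(B_1,B_2,B_3,J_1,J_2)$, so generically a two-parameter family of matrices has $\ket{\eta}$ as an eigenvector.

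Next I would impose the spectral constraints. Since one eigenvalue is already pinned to $\lambda_2$ by construction, the remaining requirement is that $\text{Tr}(H_1)=\lambda_1+\lambda_2+\lambda_3$ and $\text{Tr}(H_1^2)=\lambda_1^2+\lambda_2^2+\lambda_3^2$ (equivalently, the other two eigenvalues are $\lambda_1,\lambda_3$). That is two more polynomial equations, cutting the two-parameter family down to a finite set — but crucially not to a single point. The point is that these equations have multiple solutions: one obtains a quadratic whose two roots correspond to genuinely different matrices. An even more transparent route avoids counting dimensions: use the symmetry observed in the paragraph immediately preceding the lemma. Given one solution $H_1$ with couplings $(J_1,J_2)$ and eigenvector $\ket{\eta}$, conjugating by a sign matrix $D$ with $D^2=\identity$ produces $DH_1D$, which is isospectral, still tridiagonal, and still has $D\ket{\eta}$ as an eigenvector. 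To keep $\ket{\eta}$ itself (not $D\ket{\eta}$) as the eigenvector one needs $D\ket{\eta}=\pm\ket{\eta}$; if some $\eta_n=0$, then flipping the sign of that coordinate in $D$ changes the couplings $\text{sign}(J_{n-1})=-D_{n-1}D_n$ and $\text{sign}(J_n)=-D_nD_{n+1}$ while leaving $\ket{\eta}$ fixed, yielding a manifestly distinct matrix with the same spectrum and same eigenvector. So the sharpest statement is: whenever the target eigenvector has an interior zero entry, the sign-flip at that site produces a second solution.

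The step I expect to be the real obstacle is ensuring the alternative solution is \emph{genuinely distinct} and not an artifact — i.e., that the sign flip actually changes a nonzero coupling rather than mapping the matrix to itself, and that the construction can always be arranged so that such a zero entry (or, failing that, an honest second root of the spectral quadratic) is available. For the explicit $N=3$ example this is just checking that the quadratic for, say, $J_1^2$ is non-degenerate, which is a short computation; for the sign-flip argument one must note that if $\ket{\eta}$ has no zero entries the trick still works in a weaker form because $\ket{\eta}$ and $-\ket{\eta}$ span the same eigenline, but then one must verify the resulting couplings differ. I would therefore present the $N=3$ worked example as the primary proof — it decisively shows two distinct matrices — and remark that the sign-change mechanism explains the phenomenon structurally and shows it is generic rather than exceptional.
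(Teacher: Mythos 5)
There is a genuine gap, and it stems from what the lemma means by ``unique''. The paper's proof opens by defining uniqueness as uniqueness of the values $\{J_n^2\}$, explicitly discounting sign changes of the couplings as a triviality. Your ``structural'' mechanism --- conjugating by a sign matrix $D$ at a site where $\eta_n=0$ so that $D\ket{\eta}=\ket{\eta}$ while $J_{n-1}$ and $J_n$ flip sign --- produces a second matrix with exactly the same $\{J_n^2\}$ and the same diagonal. That is precisely the discounted triviality, so this route does not prove the lemma at all. (Also, in your aside about $\ket{\eta}$ with no zero entries: $D\ket{\eta}=-\ket{\eta}$ forces $D=-\identity$, whence $DH_1D=H_1$; the trick gives nothing there, not a ``weaker form''.)

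Your primary route --- the $N=3$ parameter count leading to a line-meets-conic intersection with two roots --- is in the right spirit (it is essentially how the paper's own counterexample is found), but it is not carried out, and the most natural instantiation hides a trap. Take $\ket{\eta}=(1,1,1)/\sqrt{3}$ with eigenvalue $0$: the eigenvector equations force $B_1=-J_1$, $B_2=-J_1-J_2$, $B_3=-J_2$, and the two trace conditions then determine $J_1+J_2$ and $J_1J_2$, so the two solutions are $(J_1,J_2)$ and $(J_2,J_1)$ --- mirror images of the same chain, with identical multisets $\{J_1^2,J_2^2\}$. That is arguably another relabelling triviality, and certainly not the decisive counterexample you promise. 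To make the argument stick you must either choose an asymmetric target (and then actually verify the two real intersection points exist and give different $\{J_n^2\}$), or do what the paper does: it exhibits a $5\times5$ matrix with $W$-state $0$-eigenvector and spectrum $\{0,\pm3,\pm5\}$, constrained so that $J_2=-\sqrt{45}/J_1$, for which the remaining condition is a genuine quadratic in $J_1^2$ with two distinct positive roots $J_1^2=\smallfrac[17+3\sqrt{5}\pm\sqrt{102\sqrt{5}-206}]{4}$ --- manifestly different coupling magnitudes, immune to both the sign-flip and the reflection objections. The idea you propose can be completed, but as written the proof is not there: the only mechanism you actually justify is the one the lemma excludes.
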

\begin{proof}
By uniqueness, we mean choice of the values $\{J_n^2\}$ -- changing the signs of the $J_n$ is a triviality which we want to discount.
The Hamiltonian
$$
\left(\begin{array}{ccccc} -J_1 & J_1 & 0 & 0 & 0 \\
J_1 & -J_1-J_2 & J_2 & 0 & 0 \\
0 & J_2 & 0 & -J_2 & 0 \\
0 & 0 & -J_2 & J_1+J_2 & -J_1 \\
0 & 0 & 0 & -J_1 & J_1
\end{array}\right)
$$
where $J_2=-\sqrt{45}/J_1$ has spectrum $0,\pm 3, \pm 5$ and the 0-eigenvector is the $W$-state for two distinct values of $J_1^2$:
$$
J_1^2=\frac{17+3\sqrt{5}\pm\sqrt{102\sqrt{5}-206}}{4}.
$$
\end{proof}

\begin{lemma}\label{lem:nosol}
Problem \ref{prob:supp} does not always have a solution.
\end{lemma}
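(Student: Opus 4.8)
The plan is to exhibit an explicit small instance of Problem \ref{prob:supp} — most naturally $N=3$, since that is the smallest case where the constraints still leave genuine freedom — for which a direct parameter count or elimination shows the defining equations are inconsistent. For $N=3$ the unknowns are $B_1,B_2,B_3,J_1,J_2$ (five real parameters), and the data are three eigenvalues $\lambda_1,\lambda_2,\lambda_3$ together with the requirement that a prescribed real unit vector $\ket{\eta}=(\eta_1,\eta_2,\eta_3)$ be the eigenvector for a chosen $\lambda\in\Lambda$. The eigenvector condition $H_1\ket{\eta}=\eta\ket{\eta}$ is three scalar equations; together with the characteristic polynomial matching $\prod(x-\lambda_i)$ (equivalently fixing the trace $\sum B_n$, the sum $\sum_{i<j}\lambda_i\lambda_j$, and the determinant) that is six constraints on five unknowns, so one expects obstructions. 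The concrete task is to pick a convenient $\ket{\eta}$ (the $W$-state $\tfrac{1}{\sqrt 3}(1,1,1)$ is the natural candidate given the paper's focus) and an eigenvalue assignment, then show no real $\{J_n\},\{B_n\}$ satisfy all equations.

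Concretely I would proceed as follows. First, use the eigenvector equations at the chosen $\lambda$ to solve for $B_1,B_2,B_3$ in terms of $J_1,J_2,\eta$ and $\lambda$: for a tridiagonal matrix these read $B_1\eta_1+J_1\eta_2=\lambda\eta_1$, $J_1\eta_1+B_2\eta_2+J_2\eta_3=\lambda\eta_2$, $J_2\eta_2+B_3\eta_3=\lambda\eta_3$, which (using $\eta_n\neq 0$, guaranteed here since no two consecutive entries vanish and in the $W$-state none do) determine each $B_n$ linearly. Substituting these into the two remaining spectral constraints — say the trace condition and the determinant condition, with the third automatically handled by the eigenvalue-sum-of-products — yields two equations purely in $J_1^2$ and $J_2^2$ (signs of $J_n$ being irrelevant by the isospectral sign-flip argument already noted in the excerpt). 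I would then show this system forces $J_1^2$ or $J_2^2$ to be negative — or forces $J_1=0$, contradicting the requirement that $H_1$ be a genuine chain (equivalently contradicting $\eta_1\eta_N\neq0$-type nondegeneracy, or simply that $H_1$ must have the prescribed distinct spectrum, which a reducible matrix cannot). Choosing the numbers so that the two conic equations in $(J_1^2,J_2^2)$ have no solution in the positive quadrant is the crux.

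An alternative, cleaner route avoids picking ad hoc numbers: argue structurally. With the $B_n$ eliminated as above, the locus of achievable spectra as $(J_1^2,J_2^2)$ ranges over $(0,\infty)^2$ is a two-dimensional set inside the three-dimensional space of spectra, so for a generic choice of $\Lambda$ (with the chosen $\lambda$ fixed) there is no solution; one then only needs to name one such $\Lambda$ and check it lies off the surface. Either way, the honest obstacle is the final verification step — showing the resulting polynomial system in $J_1^2,J_2^2$ genuinely has no admissible (real, positive, nondegenerate) solution for the advertised choice — which requires an explicit resultant or discriminant computation rather than a soft argument; everything before that is bookkeeping. I would present the $N=3$, $W$-state instance with specific eigenvalues as the witness, carry out the elimination, and exhibit the sign contradiction, remarking that larger $N$ inherit the obstruction by, e.g., padding with a decoupled-limit argument or simply noting the same counting persists.
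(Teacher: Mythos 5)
Your overall strategy---fix a small explicit instance, use the eigenvector equations to eliminate the $B_n$, and show that the remaining trace/power-sum constraints on $J_1^2,J_2^2$ are inconsistent---is exactly the strategy of the paper's proof, which takes $N=5$, $\ket{\eta}=(\ket{1}+\ket{3}+\ket{5})/\sqrt{3}$ and spectrum $\{0,\pm3,\pm5\}$ and works through $\Tr(H_1),\ldots,\Tr(H_1^4)$. But as written your proposal has a genuine gap: for this lemma the witness \emph{is} the proof, and you defer precisely that step (``the honest obstacle is the final verification''). Moreover, both heuristics you offer for why a witness should exist are incorrect. The system is not overdetermined: the eigenvector equation $H_1\ket{\eta}=\lambda\ket{\eta}$ already forces $\lambda$ to be a root of the characteristic polynomial, so of your ``six constraints on five unknowns'' only five are independent, exactly matching the five unknowns. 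The ``generic spectrum'' route fails for the same reason: with $\lambda$ fixed, the space of admissible spectra is two-dimensional and so is $(J_1,J_2)$, so the map is balanced rather than landing in a thin set. Over $\mathbb{C}$ the system generically \emph{does} have solutions; the obstruction one must exhibit is a reality obstruction, which cannot be seen by dimension counting and genuinely requires choosing numbers and checking.

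The gap is fillable, and at $N=3$ the check is even shorter than the paper's. Take $\ket{\eta}=(\ket{1}+\ket{2}+\ket{3})/\sqrt{3}$ with eigenvalue $0$ and spectrum $\{0,3,5\}$ (compatible with Lemma \ref{lem:reduction} at $t_0=\pi$). The eigenvector equations give $B_1=-J_1$, $B_2=-(J_1+J_2)$, $B_3=-J_2$; the determinant condition holds automatically; $\Tr(H_1)=-2(J_1+J_2)=8$ forces $(J_1+J_2)^2=16$, while $\Tr(H_1^2)=3(J_1^2+J_2^2)+(J_1+J_2)^2=34$ forces $J_1^2+J_2^2=6$. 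Since $(J_1+J_2)^2\le 2(J_1^2+J_2^2)=12<16$, no real $(J_1,J_2)$ exists (the solutions are complex, confirming that the count is balanced and the obstruction is one of reality). With this computation supplied your argument closes and is in fact a leaner counterexample than the paper's $N=5$ one; without it, the lemma is unproven.
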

\begin{proof}
It suffices to find a counter example. To that end, fix $N=5$ and $\ket{\eta}=(\ket{1}+\ket{3}+\ket{5})/\sqrt{3}$ with a target spectrum of $\{0,\pm3,\pm5\}$ (note that this example is compatible with the specification of Lemma \ref{lem:reduction} with $t_0=\pi$). Requiring $H_1\ket{\eta}=0$ immediately restricts the structure to
$$
H_1=\left(\begin{array}{ccccc}
0 & J_1 & 0 & 0 & 0 \\
J_1 & B_2 & -J_1 & 0 & 0 \\
0 & -J_1 & 0 & -J_4 & 0 \\
0 & 0 & -J_4 & B_4 & J_4 \\
0 & 0 & 0 J_4 & 0
\end{array}\right)
$$
We then fix $\Tr(H_1)=B_2+B_4=0$, i.e.\ $B_4=-B_2$. Next, $\Tr(H_1^3)=0=6B_2(J_1^2-J_4^2)$. We take the two cases of $B_2=0$ and $J_1^2=J_4^2$ separately. If $B_2=0$, then we can solve $J_1^2$ and $J_4^2$ simultaneously in
\begin{eqnarray*}
\Tr(H_1^2)&=34&=4(J_1^2+J_4^2)	\\
\Tr(H_1^4)&=706&=4(2J_1^4+2J_4^4+J_1^2J_4^2)
\end{eqnarray*}
There are no non-negative solutions. Similarly, for $J_1^2=J_4^2$, one has to simultaneously solve
\begin{eqnarray*}
\Tr(H_1^2)&=34&=2(B_2^2+4J_4^2)	\\
\Tr(H_1^4)&=706&=2((B_2^2+4J_4^2)^2-6J_4^4)
\end{eqnarray*}
which, again, has no solutions.
\end{proof}

\section{Arbitrarily Accurate Solutions}

It is not possible to realise any arbitrary assignment of eigenvalues and a single eigenvector. However, Problem \ref{prob:main} does not require a specific spectrum, only that certain general properties are obeyed. So, is it still possible to select a target spectrum such that the conditions of Lemma \ref{lem:reduction} are satisfied and a solution to Problem \ref{prob:main} exists?

\begin{lemma} \label{lem:numeric}
For any target state $\ket{\alpha}$ satisfying the conditions of Lemma \ref{lem:reduction} and Problem \ref{prob:supp}, and any sufficiently small prescribed accuracy $\epsilon$, there exists a time $t_0\sim1/\epsilon$ such that $\ket{k}\mapsto\ket{\alpha}$ to accuracy $1-O(\epsilon^2)$.
\end{lemma}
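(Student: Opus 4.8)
The plan is to push Lemma \ref{lem:reduction} backwards as far as the inverse eigenmode machinery allows, and to absorb the residual mismatch into $\epsilon$. First I would use the hypotheses on $\ket\alpha$ to select a site $k$ for which $\ket\eta=(\ket k-\ket\alpha)/\sqrt{2(1-\alpha_k)}$ is real, normalised, with $\eta_1\eta_N\neq 0$ and no two consecutive vanishing amplitudes, i.e.\ satisfying the necessary eigenvector conditions of Problem \ref{prob:supp}; the excluded ``very small category'' of target states is exactly where no such $k$ exists. By the classical inverse eigenmode theorem \cite{gladwell1986} these conditions are also sufficient, so there is a real symmetric tridiagonal matrix having $\ket\eta$ as an eigenvector. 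An overall shift (affecting only the $B_n$) moves the associated eigenvalue to $0$, and the diagonal conjugation by $D$ discussed above lets me fix the signs of the $J_n$ so that $0$ occupies the position in the ordered spectrum forced by the number of sign changes of $\ket\eta$. The remaining eigenvalues $\mu_2,\dots,\mu_N$ are then automatically distinct and nonzero, and there is an $(N-1)$-parameter family of such Hamiltonians (with, by the non-uniqueness already noted, some further redundancy).

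The reduction now becomes a statement purely about the evolution time. Expanding $\ket k=\braket\eta k\ket\eta+\sum_{j\ge 2}c_j\ket{v_j}$ in the eigenbasis of $H_1$ and using $\ket\alpha=\ket k-2\braket\eta k\ket\eta$, one gets
$$
\bigl\|\,e^{-iH_1t}\ket k+\ket\alpha\,\bigr\|^2=\sum_{j\ge 2}\bigl|e^{-i\mu_jt}+1\bigr|^2|c_j|^2=\sum_{j\ge 2}4\sin^2\!\bigl(\tfrac{\mu_jt-\pi}{2}\bigr)|c_j|^2 ,
$$
so, since $\sum_{j\ge2}|c_j|^2\le1$, it is enough to find $t_0$ at which every $\mu_jt_0$ lies within $O(\epsilon)$ of an odd multiple of $\pi$: each summand is then $O(\epsilon^2)$, whence $\|e^{-iH_1t_0}\ket k+\ket\alpha\|=O(\epsilon)$ and therefore $|\braket{\alpha}{e^{-iH_1t_0}k}|\ge 1-O(\epsilon^2)$, which is the advertised accuracy. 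To make this cost only $t_0\sim1/\epsilon$ — rather than the $t_0\sim\epsilon^{-(N-1)}$ that a blind simultaneous Diophantine approximation of the $\mu_j$ would demand — I would spend the free parameters to pick a reference $H_1$ whose nonzero eigenvalues share a common difference lattice, $\mu_j=\mu_2+n_jd$ with $n_j\in\mathbb Z$ and some fixed $d$. Then at $t_0=2\pi m/d$ all the phases collapse to one, $e^{-i\mu_jt_0}=e^{-i\mu_2 t_0}$, and the task reduces to the one-dimensional problem of choosing an integer $m$ with $\{m\mu_2/d\}$ within $O(1/m)$ of $\tfrac12$; the three-distance theorem supplies such an $m=O(1/\epsilon)$ (generically; if $\mu_2/d$ happens to be rational one first perturbs $\mu_2$ within the family), giving $t_0\sim1/\epsilon$ and a common phase error $O(\epsilon)$ across all $j$.

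The step I expect to be the real obstacle is the claim that such an arithmetic-progression reference spectrum is actually realisable together with the constraint $H_1\ket\eta=0$. Lemma \ref{lem:nosol} shows one cannot realise an \emph{arbitrary} prescribed spectrum, so the argument must only ask that \emph{some} commensurable-difference spectrum lie in the image of the eigenvalue map restricted to the $(N-1)$-dimensional family of tridiagonal matrices annihilating $\ket\eta$; I would establish this by exhibiting one point of that family (for instance a suitably detuned engineered chain with equally-spaced spectrum) at which the eigenvalue map is a submersion, so that its image contains an open set and hence points of the required arithmetic form. If even that fails in some configuration, the fallback is to realise $\ket\eta$ exactly together with a spectrum merely $O(\epsilon)$-close to the target lattice, for which the residual detuning times $t_0$ is still $O(\epsilon)$; either way the displayed identity converts the phase error into the stated $1-O(\epsilon^2)$ fidelity, and the remaining computations are routine.
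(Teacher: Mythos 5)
Your reduction to phase errors and the bookkeeping showing that uniformly $O(\epsilon)$ phase errors give fidelity $1-O(\epsilon^2)$ are both correct, but the construction at the heart of your argument is precisely the step that Lemma \ref{lem:nosol} warns cannot be taken for granted, and you do not close it. You need a tridiagonal matrix that annihilates $\ket{\eta}$ \emph{exactly} and simultaneously has an (essentially) commensurate nonzero spectrum. Your primary route rests on the unproven assertion that the eigenvalue map restricted to the $(N-1)$-parameter family $\{H_1:H_1\ket{\eta}=0\}$ is a submersion at some point; no such point is exhibited, and note that if this were established, your own argument would deliver a \emph{perfect} realisation in $O(1)$ time (choose the lattice spacing $d$ comparable to the radius of the open image and take $m=1$), a far stronger claim than Lemma \ref{lem:numeric} -- which should make you suspicious that this step is the hard part. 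The fallback does not rescue it: a spectrum only $O(\epsilon)$-close to the lattice, evolved for $t_0\sim 1/\epsilon$, accumulates phase errors of order $\epsilon\, t_0=O(1)$, not $O(\epsilon)$; you would need the eigenvalues within $O(\epsilon^2)$ of the lattice, and you give no mechanism for that while keeping $\ket{\eta}$ an exact null vector. (The three-distance step has a similar soft spot: $\mu_2/d$ being irrational is not enough, since near-rationals with small denominator defeat an $m=O(1/\epsilon)$ bound; but that is secondary.)

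The paper trades in the opposite direction: it keeps the spectrum exact and lets the eigenvector be approximate. It first solves the unconstrained inverse eigenmode problem ($J_n=1$, with $B_n$ fixed by $\eta_{n-1}J_{n-1}+\eta_nB_n+\eta_{n+1}J_n=0$) to get $H_\eta$ with $H_\eta\ket{\eta}=0$, then moves each nonzero eigenvalue by $O(\epsilon)$ onto a half-integer multiple of $\epsilon$ and rebuilds a Jacobi matrix $\tilde H$ from this exact target spectrum together with the first components $\braket{1}{\lambda_n}$ of the old eigenvectors via the Lanczos algorithm -- a reconstruction that \emph{always} succeeds \cite{gladwell2005}, unlike the constrained problem you pose. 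The phases at $t_0=2\pi/\epsilon$ are then exact, and the whole error lives in the drift of the null vector away from $\ket{\eta}$, shown to be second order by writing $U^\dagger\tilde U=(\identity+i\epsilon K)(\identity-i\epsilon K)^{-1}$ and noting that reality of the overlap forces the diagonal of $K$ to vanish, killing the first-order term. To salvage your route you would have to prove that a commensurate spectrum (to accuracy $O(\epsilon^2)$) lies in the image of the constrained eigenvalue map; the paper's separate remark that a perfect realisation exists ``by continuity'' is an acknowledgement that this is believed but not proved there.
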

\begin{proof}
We start by solving the inverse eigenmode problem \cite{gladwell1986} without any regard for the eigenvalues, fixing $\ket{\eta}$ to have 0 eigenvalue. Since
$$
\eta_{n-1}J_{n-1}+\eta_nB_n+\eta_{n+1}J_n=0\qquad\forall n, 
$$
then provided $\eta_n\neq0$, any choice of $J_n$ fixes the $B_n$. So, we just make a choice, say $J_n=1$. We call this Hamiltonian $H_\eta$. To correct the spectrum, we follow \cite{karbach2005}:
\begin{itemize}
\item Pick an accuracy parameter $\epsilon$ (smaller than half the smallest gap between eigenvalues in $H_\eta$).
\item Truncate the eigenvalues of $H_\eta$ to the nearest multiple of $\epsilon$.
\item Shift all the eigenvalues except the 0 value by $\pm\half\epsilon$. The choice of $\pm$ does not matter, and can be made in order to minimise the change in the eigenvalues, which need never be larger than $\epsilon/4$. This ensures that the ordering of the eigenvalues is maintained.
\item Take the values $\{\braket{1}{\lambda_n}\}$, where $\ket{\lambda_n}$ are the eigenvectors of $H_\eta$, and use these along with the target spectrum to calculate, via the Lanczos method \cite{gladwell2005}, a new Hamiltonian $\tilde H$.
\end{itemize}
The output, $\tilde H$, is guaranteed to have a spectrum that achieves the desired phases (up to a global phase of $-1$) in a time $t_0=2\pi/\epsilon$. A solution to this always exists \cite{gladwell2005}. While the 0 eigenvector is no longer $\ket{\eta}$, but $\ket{\eta_{\text{actual}}}$, since $\tilde H$ is only a perturbation of $H_\eta$, it should not be significantly different.

How different is it? We estimate $F=\braket{\eta}{\eta_{\text{actual}}}$ as an accuracy parameter (the overlap between the state produced and the desired state is $1-2(1-\alpha_1)(1-F)$ if the excitation is initially placed on site 1). By construction, $F$ is real since both $\ket{\eta}$ and $\ket{\eta_{\text{actual}}}$ are real. If $U$ and $\tilde U$ diagonalise $H_\eta$ and $\tilde H$ respectively, then the calculation of $F$ is equivalent to $\bra{m}U^\dagger\tilde U\ket{m}$ where $m$ is the index of the relevant eigenvector: $U\ket{m}=\ket{\eta}$. However, $U$ and $\tilde U$ must be very similar, so we choose an expansion
$$
U^\dagger\tilde U=(\identity+i\epsilon K)(\identity-i\epsilon K)^{-1},
$$
which maintains unitarity and the limit $\tilde U\rightarrow U$ as $\epsilon\rightarrow 0$, where $K$ is Hermitian \cite{downing1956}. Expanding for small $\epsilon$,
$$
F=\bra{m}\identity+2i\epsilon K-2\epsilon^2K^2+O(\epsilon^3)\ket{m}.
$$
Since $F$ is real, and the diagonal of $K$ is real, the diagonal of $K$ must be 0, such that we are left with the second order term, as required.
\end{proof}
By continuity of the spectral properties of the Hamiltonian, we infer that a perfect realisation must exist. Thus, as a special case, we can create any state with real, non-zero amplitudes on every site of the chain, including states such as the $W$ state. For example,
$$
H_1=\left(
\begin{array}{ccccc}
0.80985122 & 1.00004543 & 0 & 0 & 0 \\
1.00004543 & 0.23665936 & 1.00033274 & 0 & 0	\\
0 & 1.00033274 & -1.99911163 & 0.99971024 & 0	\\
0 & 0 & 0.99971024 & -1.9996369 & 1.00055901	\\
0 & 0 & 0 & 1.00055901 & -0.99954444
\end{array}
\right),
$$
with parameter $\epsilon=0.001687714$ evolves $\ket{1}\rightarrow\ket{\alpha}$ where $\alpha$ has an overlap with the 5-qubit W-state of 0.999999998.

In previous studies of perfect state transfer, it has been deemed acceptable for the arriving state to only be exact up to the application of a phase gate since, for the created state to be any use, one must have some local control at each of the output sites and that would be capable of compensating for the phase. Were one to make the same relaxation here, then any complex state (with non-zero amplitudes) can be realised simply by redefining $\ket{\alpha}\mapsto\sum_n|\alpha_n|\ket{n}$ first, and then applying local phases $R_Z(\text{Arg}(\alpha_n))$ on each of the sites at the time $t_0$.

\subsection{Error Scaling} In Lemma \ref{lem:numeric}, we have proven that the error term scales as $\epsilon^2\bra{m}K^2\ket{m}$, which immediately conveys the $\epsilon$ dependence, but disguises the $N$ dependence.  Following \cite{downing1956}, we can derive that $\bra{m}K^2\ket{m}=\sum_n|U_{nm}|^2G_n^2$ where $G$ is a diagonal matrix satisfying
\begin{equation}
\sum_{n}|U_{nm}|^2G_n=e_m \qquad \forall m\in[N]	\label{eqn:nasty}
\end{equation}
and $e_m$ is the difference between the $m^{th}$ largest intended and actual eigenvalues as a fraction of $\epsilon$. Consider
$$
\sum_m\bra{m}K^2\ket{m}=\sum_nG_n^2,
$$
which is $N$ times larger than the average error, and no smaller than the worst-case error. If $\ket{G}$ solves
$$
\left(\sum_{n,m}|U_{nm}|^2\ket{m}\bra{n}\right)\ket{G}=\sum_me_m\ket{m}
$$
(which must have a solution, even if $V=\sum_{n,m}|U_{nm}|^2\ket{m}\bra{n}$ is singular), then the error estimate is simply $\braket{G}{G}$. Thus, if $\zeta$ is the smallest non-zero singular value of $V$, we have
$$
\braket{G}{G}\leq\frac{1}{\zeta^2}\max e_m^2\leq\frac{1}{16\zeta}.
$$

To demonstrate that the scaling is not pathological, we study the special case in which $H_\eta$ has $J_n=1$ and $B_n=0$ for all $n$. This is particularly pertinent to the creation of a $W$ state. We have that
$$
V=\frac{2}{N+1}\sum_{n,m=1}^{(N+1)/2}\sin^2\left(\frac{\pi n m}{N+1}\right)\ket{n}\bra{m}.
$$
To find the eigenvalues, observe that for $N>5$, the states
$$
\ket{\frac{N+1}{2}},\quad\sum_n\ket{2n},\quad\sum_n\ket{2n-1}
$$
span a 3-dimensional subspace in which the Hamiltonian may be represented as
$$
\begin{array}{cc}
\frac{1}{4\sqrt{N+1}}\left(\begin{array}{ccc} 0 & 4 & 0 \\ 4 & \sqrt{N+1} & \sqrt{N-3} \\ 0 & \sqrt{N-3} & \sqrt{N+1} \end{array}\right) & N\equiv 3\text{ mod }4	\\
\frac{1}{4(N+1)}\left(\begin{array}{ccc} 8 & 4\sqrt{N-1} & 0 \\ 4\sqrt{N-1} & N-3 & N+1 \\ 0 & N+1 & N+1 \end{array}\right) & N\equiv 1\text{ mod }4
\end{array}.
$$
The remaining subspace squares to $\identity/4(N+1)$. Thus, the smallest absolute eigenvalue is $1/2\sqrt{N+1}$. Hence, $\sum G_n^2\sim N$, and the error dependence is $O(\epsilon^2N)$ in the worst case, but one anticipates that in typical cases, the dependence on $N$ is much weaker. %Need to check propagation of $\epsilon$s through the equations.

\section{Analytic Solutions}

The disadvantage of the previous numerical technique is that the gap between eigenvalues scales as $1/N^2$, which means that $t_0\sim N^2$. This is much slower than we would like (after all, the longer we wait, the more noise is likely to build up), so it would be advantageous to find the fastest possible solutions. To that end, we provide some analytic solutions for spin chains with the best possible spectrum, which will yield $t_0\sim N$. These solutions don't permit us any control over the target eigenvector (except that different solutions have a different eigenvector), but by finding a solution that is as close as possible to the one that we want, we will be able to select from a number of perturbative techniques to drive the solution towards one that we want.
\begin{definition}
The $N\times N$ symmetric tridiagonal matrices with diagonal elements
$$
h_n=(N-1)\left(\frac{N+1}{2}+\alpha\right)-2\left(n-\frac{N+1}{2}\right)^2
$$
and off-diagonal elements
$$
K_n=\sqrt{n(n+\alpha)(N-n)(N+\alpha-n)}
$$
have a spectrum $k(k+2\alpha+1)$ for $k=0,\ldots, N-1$ \cite{albanese2004} and $\alpha\geq 0$. We call these the Hahn matrices.
\end{definition}
In fact, \cite{albanese2004} restricted the values of $\alpha$ more strongly, but this was because other specific properties of the spectrum were required. \cite{albanese2004} also gives the eigenvectors of these matrices in terms of the Hahn polynomials.

\begin{lemma} \label{lem:desham}
If we construct the $(2N+1)\times(2N+1)$ symmetric tridiagonal matrix with 0 on the main diagonal, and off-diagonal couplings that satisfy
$$
J_{2n-1}^2+J_{2n}^2=h_n+\left(\frac{2\alpha+1}{2}\right)^2\qquad J_{2n}J_{2n+1}=K_n,
$$
then this matrix has spectrum $0$ and $\pm\left\{\left(k+\frac{2\alpha-1}{2}\right)\right\}_{k=1}^N$, where we use the values from the $N\times N$ Hahn matrices of parameter $\alpha$.
\end{lemma}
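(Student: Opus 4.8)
The plan is to exploit the chiral (bipartite) structure that the vanishing main diagonal imposes, collapsing the $(2N+1)$-dimensional eigenvalue problem onto an $N$-dimensional one governed by a shifted Hahn matrix. First I would note the elementary structural fact that a real symmetric tridiagonal matrix $A$ of odd size $2N+1$ with zero diagonal obeys $SAS=-A$ for the sign matrix $S=\mathrm{diag}(1,-1,1,\dots,1)$; hence its spectrum is symmetric about the origin, and since $2N+1$ is odd, $0$ is forced to be an eigenvalue. Then I would reorder the site basis so that the $N+1$ odd-labelled sites precede the $N$ even-labelled sites. In this basis $A$ acquires a $2\times 2$ block form with vanishing diagonal blocks and off-diagonal block $B$ together with its transpose, where $B$ is the $N\times(N+1)$ bidiagonal matrix with $B_{n,n}=J_{2n-1}$ and $B_{n,n+1}=J_{2n}$.

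Next I would compute $A^{2}=\mathrm{diag}(B^{T}B,\,BB^{T})$. A one-line multiplication gives $(BB^{T})_{nn}=J_{2n-1}^{2}+J_{2n}^{2}$ and $(BB^{T})_{n,n+1}=J_{2n}J_{2n+1}$, so the defining relations of the lemma say precisely that $BB^{T}$ equals the $N\times N$ Hahn matrix of parameter $\alpha$ shifted by $\left(\tfrac{2\alpha+1}{2}\right)^{2}\identity$. Invoking the quoted Hahn spectrum $k(k+2\alpha+1)$, $k=0,\dots,N-1$, the eigenvalues of $BB^{T}$ are $k(k+2\alpha+1)+\left(\tfrac{2\alpha+1}{2}\right)^{2}=\left(k+\tfrac{2\alpha+1}{2}\right)^{2}$, i.e.\ $\left(k+\tfrac{2\alpha-1}{2}\right)^{2}$ for $k=1,\dots,N$ after re-indexing $k\mapsto k-1$. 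For $\alpha\ge 0$ these are positive and strictly increasing in $k$; in particular $BB^{T}$ is nonsingular, so $B^{T}B$ (one dimension larger) has a single zero eigenvalue and otherwise the same spectrum as $BB^{T}$. The eigenvalues of $A$ are therefore the $\pm$ square roots of $\{(k+\tfrac{2\alpha-1}{2})^{2}\}_{k=1}^{N}$ together with a simple $0$, which is exactly the claimed spectrum $0,\ \pm\{k+\tfrac{2\alpha-1}{2}\}_{k=1}^{N}$.

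I do not expect the spectral argument itself to present any real obstacle once the block-diagonalisation of $A^{2}$ is in hand; it is almost forced. The one point that deserves care is that the lemma presupposes that such a matrix \emph{can} be built, i.e.\ that the relations $J_{2n-1}^{2}+J_{2n}^{2}=h_{n}+\left(\tfrac{2\alpha+1}{2}\right)^{2}$ and $J_{2n}J_{2n+1}=K_{n}$ admit a real solution. These form a triangular recursion: $J_{1}$ is a free parameter, $J_{2}$ is then fixed by the first relation at $n=1$, $J_{3}$ by $J_{2}J_{3}=K_{1}$, $J_{4}$ by the first relation at $n=2$, and so on, with $K_{N}=0$ consistent with the chain terminating at $J_{2N}$. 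The substantive thing to check is that the free parameter can be chosen so that every $J_{i}^{2}$ stays nonnegative along the recursion; this is where $\alpha\ge 0$ enters, since it guarantees $h_{n}+\left(\tfrac{2\alpha+1}{2}\right)^{2}>0$ and keeps the $K_{n}$ real.
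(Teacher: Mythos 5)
Your proposal is correct and follows essentially the same route as the paper: use the anticommutation with the alternating-sign operator to get a spectrum symmetric about $0$, then square the matrix and identify one block (the even-site block, your $BB^{T}$) with the Hahn matrix shifted by $\left(\frac{2\alpha+1}{2}\right)^{2}\identity$. Your write-up is somewhat more explicit than the paper's --- in particular in exhibiting the bidiagonal factor $B$, accounting for the extra zero eigenvalue of $B^{T}B$, and flagging the (unaddressed in the paper) question of when the recursion for the $J_{i}$ admits real solutions --- but the underlying argument is the same.
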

In particular, we will be interested in integer values of $\alpha$ to create the spectrum that we desire ($t_0=2\pi$). This definition permits us to create a symmetric matrix by looking at the central coupling term $J_N$, since $J_N=J_{N+1}$. Hence,
$$
J_N=\left\{\begin{array}{cc}
\sqrt{\frac{h_{(N+1)/2}+(\alpha+\half)^2}{2}} & N\text{ odd}	\\
\sqrt{K_{N/2}} & N\text{ even.}
\end{array}\right. 
$$
\begin{proof}
Let $H_1$ be the matrix constructed in this way. Clearly, it anti-commutes with the operator $\sum_{n=1}^{2N+1}(-1)^n\proj{n}$, meaning that the eigenvalues arise in $\pm\lambda$ pairs, centred on a single 0 value (since the eigenvalues of such a matrix must be non-degenerate). So, let us consider $H^2_1$. Up to a permutation, this is equivalent to a block-diagonal matrix where one of the blocks is
% $$
% \sum_{n=1}^N(J_{2n-1}^2+J_{2n}^2)\proj{n}+\sum_{n=1}^{N-1}J_{2n}J_{2n+1}(\ket{n}\bra{n+1}+\ket{n+1}\bra{n}),
% $$
$H_H+\left(\frac{2\alpha+1}{2}\right)^2\identity$, where $H_H$ is the $N\times N$ Hahn matrix. Thus, the spectrum of this block is
$$
k(k+2\alpha+1)+\left(\frac{2\alpha+1}{2}\right)^2=\left(k+\frac{2\alpha+1}{2}\right)^2
$$
$k=0,\ldots, N-1$. Since this block has eigenvalues $\left(k+\frac{2\alpha+1}{2}\right)^2$, $H$ must have eigenvalues with modulus $k+\frac{2\alpha+1}{2}$, and given that they arise in $\pm\lambda$ pairs, it must have both $\pm(k+\frac{2\alpha+1}{2})$ for $k=0,\ldots N-1$.
\end{proof}

For example, with $N=3$ and $\alpha=1$, we create the matrix
$$
H_1=\left(
\begin{array}{ccccccc}
 0 & \frac{7 \sqrt{\frac{3}{11}}}{2} & 0 & 0 & 0 & 0 & 0 \\
 \frac{7 \sqrt{\frac{3}{11}}}{2} & 0 & -4 \sqrt{\frac{2}{11}} & 0 & 0 & 0 & 0 \\
 0 & -4 \sqrt{\frac{2}{11}} & 0 & \frac{\sqrt{\frac{33}{2}}}{2} & 0 & 0 & 0 \\
 0 & 0 & \frac{\sqrt{\frac{33}{2}}}{2} & 0 & -\frac{\sqrt{\frac{33}{2}}}{2} & 0 & 0 \\
 0 & 0 & 0 & -\frac{\sqrt{\frac{33}{2}}}{2} & 0 & 4 \sqrt{\frac{2}{11}} & 0 \\
 0 & 0 & 0 & 0 & 4 \sqrt{\frac{2}{11}} & 0 & -\frac{7 \sqrt{\frac{3}{11}}}{2} \\
 0 & 0 & 0 & 0 & 0 & -\frac{7 \sqrt{\frac{3}{11}}}{2} & 0 \\
\end{array}
\right).
$$
It can be verified that its spectrum is $\left\{0,\pm\frac32,\pm\frac52,\pm\frac72\right\}$, and its 0-eigenvector, up to normalisation, is approximately
$$
\ket{1}+\ket{7}+1.072(\ket{3}+\ket{5}).
$$

It is interesting to observe that for $\alpha=1$, the 0 eigenvector is very close to $\sum_{n=1}^{N+1}\ket{n}/\sqrt{N+1}$ (the vector that we used for the impossibility proof in Lemma \ref{lem:nosol}) -- numerically we have created matrices of (odd) size up to 10003, and the overlap, $F$, with that target eigenvector is always at least 0.999 (up to some signs which can be corrected by changing the signs of the couplings). Equally this means that the overlap with the $W$ state is approximately $1/\sqrt{2}$. Consequently, it can serve as a crude starting for numerical schemes -- by judiciously changing the signs of the coupling strengths we can guarantee an overlap with any target state of approximately $\left(\sum_n|\alpha_{2n-1}|\right)\sqrt{2}/\sqrt{N+1}$ which is never too small.

\section{Speed Limits}

For a given target state $\ket{\psi_T}$ in Problem \ref{prob:main}, how small can the synthesis time, $t_0$, be made? The choice of spectrum in the above analytic construction was motivated by the insight from perfect state transfer \cite{yung2006} that by compressing the spectrum as much as possible, one achieves the minimum state transfer time for a given maximum coupling strength. Here we prove that those insights carry forward to the different spectral conditions that we impose for the state generation task. The following proof technique represents an improvement over \cite{yung2006} for the case of odd length chains.
\begin{lemma}\label{lem:speed}
A state generation task satisfying the construction presented in Lemma \ref{lem:reduction} for a chain of length $2N+1$ requiring time $t_0$ has a maximum coupling strength
$$
J_{\max}\geq\frac{\pi}{2t_0}\sqrt{N^2-\half}
$$
if the Hamiltonian is symmetric (i.e.\ $B_n=B_{2N+2-n}$ and $J_n^2=J_{2N+1-n}^2$).
\end{lemma}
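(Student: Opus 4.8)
The plan is to exploit the symmetry structure and the specific spectral conditions of Lemma~\ref{lem:reduction} to bound the operator norm of $H_1$ from below in terms of $t_0$, and then relate the operator norm of a tridiagonal matrix to its maximum coupling strength. First I would set up notation: by Lemma~\ref{lem:reduction} the spectrum consists of a single $0$ eigenvalue together with $2N$ nonzero eigenvalues that are half-integer multiples of $2\pi/t_0$, i.e.\ of the form $(m_j+\half)\cdot 2\pi/t_0$ for integers $m_j\geq 0$. Because $H_1$ is symmetric under the reflection $n\mapsto 2N+2-n$, it commutes with the corresponding permutation operator $P$, so the eigenvectors split into the $P=+1$ (symmetric) and $P=-1$ (antisymmetric) sectors. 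For a chain of odd length $2N+1$ the symmetric sector has dimension $N+1$ and the antisymmetric sector has dimension $N$; the $0$ eigenvector, being $\ket{\eta}\propto\ket{k}-\ket{\alpha}$ reflected appropriately (or rather the relevant eigenvector of the constructed $H_1$), sits in one of these sectors. The key combinatorial point is that within each sector the eigenvalues are distinct half-integers times $2\pi/t_0$, so the $N$ nonzero eigenvalues in, say, the sector containing more of them must have absolute values at least $\tfrac{2\pi}{t_0}\cdot\tfrac12, \tfrac{2\pi}{t_0}\cdot\tfrac32,\ldots$ — that is, the $j$-th smallest positive one is $\geq \tfrac{2\pi}{t_0}(j-\half)$.

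Next I would convert this into a lower bound on $\Tr(H_1^2)$ restricted to that sector, or more directly on $\sum_j \lambda_j^2$. Using $\sum_{j=1}^{N}(j-\half)^2 = \tfrac{N}{3}(N^2-\tfrac14)\cdot\frac{?}{}$ — concretely $\sum_{j=1}^N (2j-1)^2 = \tfrac{N(2N-1)(2N+1)}{3}$, so $\sum_{j=1}^N(j-\half)^2 = \tfrac{N(4N^2-1)}{12}$ — one gets $\sum_j\lambda_j^2 \geq \left(\tfrac{2\pi}{t_0}\right)^2\tfrac{N(4N^2-1)}{12}$ for the eigenvalues in that sector, hence $\Tr(H_1^2)$ is at least roughly this. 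On the other side, for a tridiagonal matrix with zero diagonal, $\Tr(H_1^2) = 2\sum_{n=1}^{2N}J_n^2 \leq 4N\,J_{\max}^2$. Combining, $4N\,J_{\max}^2 \geq \left(\tfrac{2\pi}{t_0}\right)^2\tfrac{N(4N^2-1)}{12}$, i.e.\ $J_{\max}^2 \geq \tfrac{\pi^2}{3t_0^2}(N^2-\tfrac14)$, which does not quite match. The stated bound $J_{\max}\geq \tfrac{\pi}{2t_0}\sqrt{N^2-\half}$ corresponds to $J_{\max}^2\geq \tfrac{\pi^2}{4t_0^2}(N^2-\half)$; so the argument needs sharpening — presumably by using the stronger fact that \emph{both} sectors contribute (the $2N$ nonzero eigenvalues consist of $N$ from one sector and $N-1$ or $N$ from the other, and the $\pm\lambda$ pairing forces a more rigid arrangement), and by a more careful count of which half-integers can appear. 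The cleanest route is likely to bound $\sum_{\text{all }2N}\lambda_j^2$ using the $\pm$ symmetry: the nonzero spectrum is $\pm$-symmetric, so it is $\{\pm\mu_1,\ldots,\pm\mu_N\}$ with the $\mu_i$ distinct positive half-integer multiples of $2\pi/t_0$, giving $\sum\lambda_j^2 = 2\sum_i\mu_i^2 \geq 2\left(\tfrac{2\pi}{t_0}\right)^2\sum_{i=1}^N(i-\half)^2$, and then $\Tr(H_1^2)=2\sum_n J_n^2$ with the symmetry $J_n^2=J_{2N+1-n}^2$ halving the number of independent couplings — I would track these factors of two carefully, since that is exactly where the claimed improvement over \cite{yung2006} for odd chains comes from.

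The main obstacle I anticipate is getting the constant exactly right: naively summing squares of half-integers overcounts, because the $\pm$-pairing plus the distinctness-within-each-reflection-sector constraint is more restrictive than "distinct half-integer multiples," and one must identify the genuinely extremal configuration (the spectrum of the Hahn-type matrix of Lemma~\ref{lem:desham}, namely $\pm(k+\half)$ for $k=0,\ldots,N-1$, should be the optimum, and the bound should be tight for it). So the real work is: (i) argue the extremal spectrum is the "densest" one $\{0\}\cup\{\pm(k+\half)\tfrac{2\pi}{t_0}:k=0,\ldots,N-1\}$; (ii) compute $\sum_{k=0}^{N-1}(k+\half)^2 = \tfrac{N(4N^2-1)}{12}$ wait—recompute—$=\tfrac{N}{3}(N^2-\tfrac14)+\ldots$; and (iii) relate $\Tr(H_1^2)$ to $J_{\max}$ via $\Tr(H_1^2)=2\sum_{n=1}^{2N}J_n^2$ and the reflection symmetry, then solve for $J_{\max}$. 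The matching with $\sqrt{N^2-\tfrac12}$ will fall out once the correct extremal sum and the correct counting of independent couplings are in place; I expect the symmetry constraint $J_n^2 = J_{2N+1-n}^2$ to be what replaces a factor in the $\Tr(H_1^2)$ side and yields the $N^2-\half$ rather than $N^2-\tfrac14$.
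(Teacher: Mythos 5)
You have the right spectral setup (the nonzero eigenvalues are distinct odd multiples of $\pi/t_0$, and the reflection symmetry splits the spectrum into parity sectors), but the quantity you chose to bound is the wrong one, and the approach fails at that point. The lemma does \emph{not} assume the magnetic fields vanish, so $\Tr(H_1^2)=\sum_n B_n^2+2\sum_n J_n^2$, not $2\sum_n J_n^2$. Since the $B_n$ are unconstrained, the identity $\Tr(H_1^2)=\sum_j\lambda_j^2$ gives no lower bound on any coupling: the diagonal can carry all of the spectral weight. (The paper treats the $B_n=0$ case as a separate variant \emph{after} the lemma, precisely because it requires a different argument.) Even granting $B_n=0$, dividing a bound on $\sum_n J_n^2$ by the number of couplings only bounds the \emph{average} coupling, which is why your constants refuse to match $N^2-\half$ — you correctly sensed this but did not locate the cause. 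You also never use the interlacing of the two parity sectors, which turns out to be the essential ingredient.

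The paper's mechanism is different: first gauge away the identity-shift freedom so that $B_{N+1}=0$, then evaluate $\Tr(SH_1^2)$ where $S$ is the reflection $n\mapsto 2N+2-n$. For a reflection-symmetric tridiagonal matrix this trace collapses to $4J_N^2$ — only the \emph{central} coupling survives, and all other $B_n$ and $J_n$ drop out — while spectrally it equals $\sum_k\nu_k^2-\sum_k\mu_k^2$, the difference between the symmetric-sector and antisymmetric-sector sums of squares. The interlacing $\nu_k<\mu_k<\nu_{k+1}$, combined with the constraint that consecutive eigenvalues are separated by integer multiples of $2\pi/t_0$ except for a half-integer gap on either side of the single special eigenvalue, lower-bounds this difference term by term; minimising the resulting quadratic over the offset $\nu_1$ yields $4J_{\max}^2\geq 4J_N^2\geq N^2-\half$ in the natural units. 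So the bound genuinely constrains one specific coupling (the middle one) rather than an average, which is both why it is tight and why no assumption on the fields other than the reflection symmetry is needed. To repair your proof you would need to replace the $\Tr(H_1^2)$ step with this $\Tr(SH_1^2)$ computation and then carry out the interlacing estimate.
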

Note that our previous construction satisfies this for $\alpha=0$ and odd $N$. 
\begin{proof}
We remove the freedom of $\identity$ shifts on the Hamiltonian by fixing $B_{N+1}=0$. Having done this, we observe that the imposed symmetry of the Hamiltonian splits the matrix into anti-symmetric and symmetric subspaces with mutually interlacing eigenvalues $\{\mu_k\}_{k=1}^N$ and $\{\nu_k\}_{k=1}^{N+1}$ respectively ($\nu_k<\mu_k<\nu_{k+1}$). All eigenvalues must have an integer spacing, except for a spacing of $\half$ either side of one special eigenvalue. Let's assume this special eigenvalue is $\mu_{\tilde k}$. We have that
$$
4J_{\max}^2\geq 4J_N^2=\Tr(SH^2)=\sum\eta_k^2-\sum\mu_k^2.
$$
If we use the bounds $\eta_k\geq \eta_1+2(k-1)-\delta_{k>\tilde k}$ and $\mu_k\leq\eta_{k+1}-1+\half\delta_{k=\tilde k}$, then one readily derives
$$
4J_{\max}^2\geq\eta_1^2+(2N-1)\eta_1+2N^2-N-\frac{1}{4},
$$
which is the smallest possible ($N^2-\half$) for the choice $\eta_1=\half-N$.

One can follow a similar calculation under the assumption that the special eigenvalue is $\eta_{\tilde k}$. In that case, one would derive
$
4J_{\max}^2\geq N^2.
$
\end{proof}
Of course, even for a symmetric target eigenvector, it is not necessary that the Hamiltonian be symmetric, and the method of Lemma \ref{lem:reduction} is far from unique, so this proof has limited applicability. Variants of this proof can address different assumptions. For example, we can exchange the symmetry assumption for assuming that all the magnetic fields are equal (i.e.\ 0 up to identity shifts in the Hamiltonian). In this case, $\Tr(H^2)=\sum\lambda_n^2=2\sum_nJ_n^2$, and we relate $J_n=\eta_{n-1}J_{n-1}/\eta_{n+1}$. Given all the $\lambda_n$ are separated by at least $2\pi/t_0$, a similar inequality can be derived, which proves that any solution with $B_n=0$ and a spectrum $0,\pm 1,\pm3,\pm 5,\ldots$ is optimal for a solution of this type. When we try to relax the $B_n=0$ assumption, we run out of sufficient information to make the bound as tight as possible. Nevertheless, by fixing $|J_n|\leq J_{\max}$ for all $n$, one gets
\begin{equation}
\frac{J_{\max}t_0}{\pi}\sqrt{\frac{2(N-1)+\sum_{n=1}^{N}\left(\frac{|\eta_{n-1}|+|\eta_{n+1}|}{\eta_n}\right)^2}{N(N-1)(N-2)}}\geq\frac{1}{\sqrt{3}}		\label{eqn:nottight}
\end{equation}
thanks to the relation $\eta_{n-1}J_{n-1}+\eta_{n+1}J_n=-\eta_nB_n$, and assuming $\eta_n\neq 0$.

More generally, \cite{bravyi2006} conveys that to generate a non-trivial correlation function between two regions separated by a distance $L$ requires at least a time $\sim L$ for a fixed maximum coupling strength. For instance, if we consider the two operators $O_A=Z_1$ and $O_B=Z_N$, and evaluate
$$
\sigma(\ket{\psi})=\bra{\psi}O_AO_B\ket{\psi}-\bra{\psi}O_A\ket{\psi}\bra{\psi}O_B\ket{\psi}
$$ 
then at the start of the evolution, wherever the excitation is initially localised, we have $\sigma(\ket{n})=0$, while the final state has $\sigma(\ket{\alpha})=-4\alpha_1^2\alpha_N^2$. Provided $\alpha_1\alpha_N$ is not exponentially small, \cite{bravyi2006} conveys that $J_{\max}t_0\sim \Omega(N)$, so the scaling relation is certainly optimal, even without the additional assumptions in Lemma \ref{lem:speed}.

\begin{figure}
\begin{center}
\includegraphics[width=0.45\textwidth]{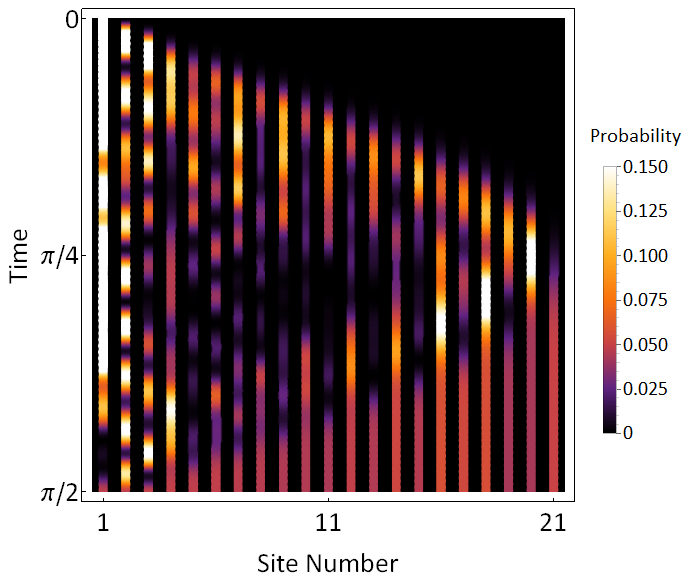}
\end{center}
\vspace{-0.5cm}
\caption{Time evolution of probability of excitation being at each site in a 21 qubit chain, approximating the evolution of $\ket{1}$ evolving to the $W$-state.}\label{fig:Wstate}
\vspace{-0.5cm}
\end{figure}

\section{Perturbative manipulations} The matrices that we have introduced in Lemma \ref{lem:desham} might have the ideal spectrum but each has a fixed 0-vector. If we want a different vector, we must apply an isospectral transformation. The following method has proven successful for systems of a few tens of qubits. We start with a Hamiltonian $H_1$ (couplings $J_n$ and fields $B_n$) and aim to make a new Hamiltonian which has the same spectrum, and whose 0-eigenvector is a better approximation to $\sum_n\eta_n\ket{n}$. The first step is to change the signs of the couplings (which makes no difference to the spectrum), to
$$
\text{sign}(J_n):=-\text{sign}\left(\frac{\eta_{n-1}J_{n-1}}{\eta_{n+1}}\right).
$$
because this minimises the norm of
$$
V=-\sum_{n=1}^{N}\frac{\eta_{n-1}J_{n-1}+\eta_{n+1}J_n}{\eta_n}\proj{n},
$$
making it as close to a perturbation as possible. We then follow an iterative procedure whereby we take $H_1+\delta V$, with $\delta=\min(1,\epsilon/\|V\|)$ for some $\epsilon\ll1$, calculate the eigenvectors $\ket{\tilde\lambda_n}$, and then find a new Hamiltonian (by following the Lanczos algorithm) using the target spectrum and the elements $\{\braket{1}{\tilde\lambda_n}\}$. The overall step is isospectral by construction, and should provide a small ($O(\epsilon)$) improvement in the accuracy of the target eigenvector. Thus, repetition is anticipated to drive us towards a good solution, should one exist. For example, Fig.\ \ref{fig:Wstate} depicts the evolution of a 21 qubit system which performs the evolution $\ket{1}\rightarrow\ket{\psi}$ where $\frac{1}{\sqrt{21}}\sum_{n=1}^{21}\braket{n}{\psi}\approx 1-2\times 10^{-15}$. With regards to the optimal speed, this example gives that $J_{\max}t_0=4.66$ while Eq.\ (\ref{eqn:nottight}) specifies that $J_{\max}t_0\geq 4.45$; there is little margin for finding a faster solution.
% For example, for $N=11$, we find that the parameters
% \begin{eqnarray*}
% \{B_n\}&=& \{1.00836,0.962406,1.02552,0.498162,-1.36946,	\\
% &&-1.11387,-2.49765,-1.46553,0.453415,3.32046,	\\
% &&-0.821809\}	\\
% \{J_n\}&=&\{2.33599,4.4492,-5.47472,4.97656,-3.6071,	\\
% &&4.72097,-2.22333,3.68886,-4.14227,0.821809\}
% \end{eqnarray*}
% cause the evolution
% $$
% \ket{1}\mapsto\frac{1}{\sqrt{11}}\sum_{n=1}^{11}\ket{n}
% $$
% in time $t_0=\pi$. Using this technique, we have found equivalent matrices for up to 35 qubits.

% {\em Many Excitations:} Understanding the dynamics of the single excitation subspace of a Hamiltonian such as Eq.\ (\ref{eqn:ham}) immediately yields a description of how states in higher excitation subspaces evolve. However, the strategy of Lemma \ref{lem:reduction} renders the evolution at time $t_0$ in higher excitation subspaces rather simple. For example, imagine we initialise the system as $\ket{1,2}$, meaning that there are excitations on the first two qubits of the chain. After time $t_0$, the output state is
% $$
% (1-2\eta_1^2)\ket{1,2}+\eta_2\sum_{n=2}^N\eta_n\ket{1,n}-\eta_1\sum_{n=3}^N\eta_n\ket{2,n}.
% $$
% In particular, if we post-select on the case where there aren't excitations on both qubits 1 and 2, the state of the rest of the chain is exactly the state $\ket{\alpha}$ from the one-excitation subspace, restricted to the same set of qubits.

\section{Conclusions} We have shown that a spin chain can be engineered to create almost any single excitation state from its time evolution (up to local phases) vastly extending their utility. Our results can readily be applied to local free-fermion models (such as the transverse Ising model), or any one-dimensional nearest-neighbour Hamiltonian that is excitation preserving (such as the Heisenberg model). 

Any target state with no consecutive zero amplitudes can be realised. To get consecutive zeros, one could examine the technique that \cite{gladwell1986} specifies for fixing two eigenvectors of a matrix. While this gives no control over the spectrum, the procedure of Lemma \ref{lem:numeric} can be applied to get high accuracy solution. However, this can give no more than two consecutive zeros \footnote{For two eigenvectors \unexpanded{$|\eta^1\rangle$} and \unexpanded{$|\eta^2\rangle$} necessary conditions on there being a corresponding tridiagonal matrix include that $s_n=t_n=0$ or $s_nt_n>0$ for each $n=1,\ldots,N$, where \unexpanded{$s_n=\sum_{m=1}^n\eta_m^1\eta_m^2$} and $t_n=\eta_{n}^1\eta_{n+1}^2-\eta_{n+1}^1\eta_n^2$. However, the condition of two consecutive zeros is $\eta_n^1+\eta_n^2=0$ and $\eta_{n+1}^1+\eta_{n+1}^2=0$, which in turn means $t_n=0$, requiring $\eta_n^1=\eta_n^2=0$ such that $s_n=0$. This allows two zeros together, but to add a third consecutive zero would require two consecutive zeros in both eigenvectors.}. The challenge is to design systems that produce states with many 0 amplitudes, which is likely to require inordinate control over most of the eigenvectors. This is addressed in \cite{kay2016-b}.

{\em Acknowledgements:} We would like to thank L.\ Banchi and G.\ Coutinho for introductory conversations. This work was supported by EPSRC grant EP/N035097/1.

\bibliography{../../../References}

\end{document}